\begin{document}

\title*{Heuristic for Maximizing Grouping Efficiency in the Cell Formation Problem}

\author{Ilya Bychkov \and Mikhail Batsyn \and Panos M. Pardalos}

\institute
{
  Ilya Bychkov \and Mikhail Batsyn \and Panos M. Pardalos \at Laboratory of Algorithms and Technologies for Network Analysis, National Research University Higher School of Economics, 136 Rodionova Street, Nizhny Novgorod, Russia, 603093, \email{ibychkov@hse.ru, mbatsyn@hse.ru, pardalos@ufl.edu}
\and
Panos M. Pardalos, \at Center for Applied Optimization, University of Florida, USA, 401 Weil Hall, P.O. Box 116595, Gainesville, FL 32611-6595, \email{pardalos@ufl.edu}
}
\maketitle

\abstract
{
In our paper we consider the Cell Formation Problem in Group Technology with grouping efficiency as an objective function. We present a heuristic approach for obtaining high-quality solutions of the CFP. The suggested heuristic applies an improvement procedure to obtain solutions with high grouping efficiency. This procedure is repeated many times for randomly generated cell configurations. Our computational experiments are performed for popular benchmark instances taken from the literature with sizes from 10x20 to 50x150. Better solutions unknown before are found for 23 instances of the 24 considered.
}

\section{Introduction}

\citet{16} was the first who formulated the main ideas of the group 
technology. The notion of the Group Technology was introduced in Russia by 
\citet{34}, though his work was translated to English only in 1966 \citep{35}. 
One of the main problems stated by the Group Technology is the optimal 
formation of manufacturing cells, i.e. grouping of machines and parts into 
cells such that for every machine in a cell the number of the parts from 
this cell processed by this machine is maximized and the number of the parts 
from other cells processed by this machine is minimized. In other words the 
intra-cell loading of machines is maximized and simultaneously the 
inter-cell movement of parts is minimized. This problem is called the Cell 
Formation Problem (CFP). \citet{6} suggested his Product Flow Analysis 
(PFA) approach for the CFP, and later popularized the Group Technology and 
the CFP in his book \citep{7}. 

The CFP is NP-hard since it can be reduced to the clustering problem \citep{3}. 
That is why there is a great 
number of heuristic approaches for solving CFP and almost no exact ones. The 
first algorithms for solving the CFP were different clustering techniques. 
Array-based clustering methods find rows and columns permutations of the 
machine-part matrix in order to form a block-diagonal structure. These 
methods include: Bond Energy Algorithm (BEA) of \citet{33}, 
Rank Order Clustering (ROC) algorithm by \citet{20}, its improved version 
ROC2 by \citet{21}, Direct Clustering Algorithm (DCA) of 
\citet{9}, Modified Rank Order Clustering (MODROC) algorithm by 
\citet{11}, the Close Neighbor Algorithm (CAN) 
by \citet{5}. Hierarchical clustering methods at first form 
several big cells, then divide each cell into smaller ones and so on 
gradually improving the value of the objective function. The most well-known 
methods are Single Linkage \citep{32}, Average Linkage 
\citep{44} and Complete Linkage \citep{36} algorithms. 
Non-hierarchical clustering methods are iterative approaches which start 
from some initial partition and improve it iteratively. The two most 
successful are GRAFICS algorithm by \citet{47} and 
ZODIAC algorithm by \citet{12}. A number of 
works considered the CFP as a graph partitioning problem where machines are 
vertices of a graph. \citet{42} used clique partitioning
of the machines graph. \citet{1} implemented a heuristic 
partitioning algorithm to solve CFP. \citet{39} and \citet{40} suggested an 
algorithm based on the minimum spanning tree problem. Mathematical
programming approaches are also very popular for the CFP. Since the 
objective function of the CFP is rather complicated from the mathematical 
programming point of view most of the researchers use some approximation 
model which is then solved exactly for small instances and heuristically for
large. \citet{27} formulated CFP via p-median model and solved several 
small-size CFP instances, \citet{45} used Generalized Assignment Problem 
as an approximation model, \citet{50} proposed a simplified p-median 
model for solving large CFP instances, \citet{22} 
applied minimum k-cut problem to the CFP, 
\citet{23} used p-median approximation model and solved it exactly by means of 
their pseudo-boolean approach including large CFP instances up to 50x150 
instance. A number of meta-heuristics have been applied recently to the CFP. 
Most of these approaches can be related to genetic, simulated annealing, 
tabu search, and neural networks algorithms. Among them such works as: 
\citet{17}, \citet{51}, \citet{52}, \citet{30}, \citet{31}, \citet{53}.

Our heuristic algorithm is based on sequential improvements of the solution.
We modify the cell configuration by enlarging one cell and reducing another.
The basic procedure of the algorithm has the following steps:
\begin{enumerate}
\item Generate a random cell configuration.
\item Improve the initial solution moving one row or column from one cell to another until the grouping efficiency is increasing.
\item Repeat steps 1-2 a predefined number of times (we use 2000 times for computational experiments in this paper).
\end{enumerate}

The paper is organized as follows.
In the next section we provide the Cell Formation Problem formulation.
In section \ref{algorithm} we present our improvement heuristic that allows us 
to get good solutions by iterative modifications of cells which lead to increasing of the objective function.
In section \ref{computres} we report our computational results 
and section \ref{conclusion} concludes the paper with a short summary.

\section{The Cell Formation Problem}
\label{cfpform}
The CFP consists in an optimal grouping of the given machines and parts into 
cells. The input for this problem is given by $m$ machines, $p$ parts and a 
rectangular machine-part incidence matrix $A=[a_{ij} ]$, where $a_{ij} =1$ 
if part $j$ is processed on machine $i$. The objective is to find an optimal 
number and configuration of rectangular cells (diagonal blocks in the 
machine-part matrix) and optimal grouping of rows (machines) and columns 
(parts) into these cells such that the number of zeros inside the 
chosen cells (voids) and the number of ones outside these cells (exceptions) 
are minimized. A concrete combination of rectangular cells in a solution (diagonal blocks in the 
machine-part matrix) we will call a cells configuration. Since it is usually not possible to minimize these two 
values simultaneously there have appeared a number of compound criteria 
trying to join it into one objective function. Some of them are presented 
below. 

For example, we are given the machine-part matrix \citep{49} shown in table \ref{table1}.
Two different solutions for this CFP are shown in tables \ref{table2} and \ref{table3}.
The left solution is better because it has less voids (3 against 4) and 
exceptions (4 against 5) than the right one. But one of its cells is a 
singleton -- a cell which has less than two machines or parts. In some 
CFP formulations singletons are not allowed, so in this case this solution 
is not feasible. In this paper we consider the both cases (with allowed 
singletons and with not allowed) and when there is a solution with 
singletons found by the suggested heuristic better than without singletons 
we present the both solutions.

\begin{table}
\centering
        \begin{tabular}{ c|ccccccc|}
                  \multicolumn{1}{c}{~} & \multicolumn{1}{c}{$p_{1}$ } & \multicolumn{1}{c}{$p_{2}$ } & \multicolumn{1}{c}{$p_{3}$ } & \multicolumn{1}{c}{$p_{4}$ } & \multicolumn{1}{c}{$p_{5}$ } & \multicolumn{1}{c}{$p_{6}$ } & \multicolumn{1}{c}{$p_{7}$} \\ \hhline{~-------}
                  $m_{1}$ & \hspace{0.5em} 1\hspace{0.5em}  & \hspace{0.5em} 0\hspace{0.5em}  &\hspace{0.5em} 0\hspace{0.5em}  &\hspace{0.5em}0\hspace{0.5em}  &\hspace{0.5em}1\hspace{0.5em} &\hspace{0.5em}1\hspace{0.5em} &\hspace{0.5em}1\hspace{0.5em} \\ 
                  $m_{2}$ &0 &1 &1 &1 &1 &0 &0 \\ 
                  $m_{3}$ &0 &0 &1 &1 &1 &1 &0 \\ 
                  $m_{4}$ &1 &1 &1 &1 &0 &0 &0 \\ 
                  $m_{5}$ &0 &1 &0 &1 &1 &1 &0 \\ \hhline{~-------}
        \end{tabular}
        \caption{Machine-part 5x7 matrix from \citet{49}}
        \label{table1}
\end{table}

\begin{table}
\centering
        \begin{tabular}{ c|ccccccc|}
                  \multicolumn{1}{c}{~} & \multicolumn{1}{c}{$p_{7}$ } & \multicolumn{1}{c}{$p_{6}$ } & \multicolumn{1}{c}{$p_{1}$ } & \multicolumn{1}{c}{$p_{5}$ } & \multicolumn{1}{c}{$p_{3}$ } & \multicolumn{1}{c}{$p_{2}$ } & \multicolumn{1}{c}{$p_{4}$} \\ \hhline{~-------}
                  $m_{1}$ & \hspace{0.5em}\cellcolor{yellow} 1\hspace{0.5em}  & \cellcolor{yellow} 1\hspace{0.5em}  &\cellcolor{yellow} \hspace{0.5em}  1 \hspace{0.5em}  &\hspace{0.5em}1\hspace{0.5em}  &\hspace{0.5em}0\hspace{0.5em} &\hspace{0.5em}0\hspace{0.5em} &\hspace{0.5em}0\hspace{0.5em} \\ 
                  $m_{4}$ &0 &0 &1 &\cellcolor{yellow}0 &\cellcolor{yellow}1 &\cellcolor{yellow}1 &\cellcolor{yellow}1 \\ 
                  $m_{3}$ &0 &1 &0 &\cellcolor{yellow}1 &\cellcolor{yellow}1 &\cellcolor{yellow}0 &\cellcolor{yellow}1 \\ 
                  $m_{2}$ &0 &0 &0 &\cellcolor{yellow}1 &\cellcolor{yellow}1 &\cellcolor{yellow}1 &\cellcolor{yellow}1 \\ 
                  $m_{5}$ &0 &1 &0 &\cellcolor{yellow}1 &\cellcolor{yellow}0 &\cellcolor{yellow}1 &\cellcolor{yellow}1 \\ \hhline{~-------}
        \end{tabular}
        \caption{Solution with singletons}
        \label{table2}
\end{table}

\begin{table}
\centering
        \begin{tabular}{ c|ccccccc|}
                  \multicolumn{1}{c}{~} & \multicolumn{1}{c}{$p_{7}$ } & \multicolumn{1}{c}{$p_{1}$ } & \multicolumn{1}{c}{$p_{6}$ } & \multicolumn{1}{c}{$p_{5}$ } & \multicolumn{1}{c}{$p_{4}$ } & \multicolumn{1}{c}{$p_{3}$ } & \multicolumn{1}{c}{$p_{2}$} \\ \hhline{~-------}
                  $m_{1}$ &\hspace{0.5em}\cellcolor{yellow}1\hspace{0.5em}  & \cellcolor{yellow}\hspace{0.5em}1\hspace{0.5em}  &\hspace{0.5em} 1\hspace{0.5em}  &\hspace{0.5em}1\hspace{0.5em}  &\hspace{0.5em}0\hspace{0.5em} &\hspace{0.5em}0\hspace{0.5em} &\hspace{0.5em}0\hspace{0.5em} \\ 
                  $m_{4}$ &\cellcolor{yellow}0 &\cellcolor{yellow}1 &0 &0 &1 &1 &1 \\ 
                  $m_{2}$ &0 &0 &\cellcolor{yellow}0 &\cellcolor{yellow}1 &\cellcolor{yellow}1 &\cellcolor{yellow}1 &\cellcolor{yellow}1 \\ 
                  $m_{3}$ &0 &0 &\cellcolor{yellow}1 &\cellcolor{yellow}1 &\cellcolor{yellow}1 &\cellcolor{yellow}1 &\cellcolor{yellow}0 \\ 
                  $m_{5}$ &0 &0 &\cellcolor{yellow}1 &\cellcolor{yellow}1 &\cellcolor{yellow}1 &\cellcolor{yellow}0 &\cellcolor{yellow}1 \\ \hhline{~-------}
        \end{tabular}
        \caption{Solution without singletons}
        \label{table3}
\end{table}

There are a number of different objective functions used for the CFP. The 
following four functions are the most widely used:
\begin{enumerate}
\item Grouping efficiency suggested by \citet{13}:

\begin{equation}
\eta =q\eta _1 +(1-q)\eta _2 ,
\end{equation}
where
\[
\eta _1 =\frac{n_1 -n_1^{out} }{n_1 -n_1^{out} +n_0^{in} }=\frac{n_1^{in} 
}{n^{in}},
\]
\[
\eta _2 =\frac{mp-n_1 -n_0^{in} }{mp-n_1 -n_0^{in} +n_1^{out} 
}=\frac{n_0^{out} }{n^{out}},
\]
$\eta _1$ -- a ratio showing the intra-cell loading of machines (or the 
ratio of the number of ones in cells to the total number of elements in 
cells).

$\eta _2$ -- a ratio inverse to the inter-cell movement of parts (or the 
ratio of the number of zeroes out of cells to the total number of elements 
out of cells).

q -- a coefficient ($0\le q\le 1)$ reflecting the weights of the machine 
loading and the inter-cell movement in the objective function. It is usually 
taken equal to $\frac{1}{2}$, which means that it is equally important to maximize the 
machine loading and minimize the inter-cell movement.

$n_1$ -- a number of ones in the machine-part matrix, 

$n_0$ -- a number of zeroes in the machine-part matrix,

$n^{in}$ -- a number of elements inside the cells, 

$n^{out}$ -- a number of elements outside the cells,

$n_1^{in}$ -- a number of ones inside the cells, 

$n_1^{out} $ -- a number of ones outside the cells,

$n_0^{in}$ -- a number of zeroes inside the cells, 

$n_0^{out} $ -- a number of zeroes outside the cells.
\newline
\item Grouping efficacy suggested by \citet{24} :
\begin{equation}
\tau =\frac{n_1 -n_1^{out} }{n_1 +n_0^{in} }=\frac{n_1^{in} }{n_1 +n_0^{in} }
\end{equation}

\item Group Capability Index (GCI) suggested by \citet{18}:
\begin{equation}
GCI=1-\frac{n_1^{out} }{n_1 }=\frac{n_1 -n_1^{out} }{n_1 }
\end{equation}
\item Number of exceptions (ones outside cells) and voids (zeroes inside 
cells):
\begin{equation}
E+V=n_1^{out} +n_0^{in} 
\end{equation}
\end{enumerate}
The values of these objective functions for the solutions in tables \ref{table2} and \ref{table3} are shown below.
\[
\eta = \frac{1}{2} \cdot \frac{16}{19} + \frac{1}{2} \cdot \frac{12}{16} \approx 79.60\% \qquad
\eta = \frac{1}{2} \cdot \frac{15}{19} + \frac{1}{2} \cdot \frac{11}{16} \approx 73.85\%
\]
\[
\tau = \frac{20 - 4}{20 + 3} \approx 69.57\% \qquad
\tau = \frac{20 - 5}{20 + 4} \approx 62.50\%
\]
\[
GCI = \frac{20 - 4}{20} \approx 80.00\% \qquad
GCI = \frac{20 - 5}{20} \approx 75.00\%
\]
\[
E+V = 4 + 3 = 7 \qquad
E+V = 5 + 4 = 9
\]
In this paper we use the grouping efficiency measure and compare our computational results with the results of \citet{53} and \citet{23}.

The mathematical programming model of the CFP with the grouping efficiency objective function can be described using boolean variables $x_{ik}$ and $y_{jk}$.
Variable $x_{ik}$ takes value 1 if machine $i$ belongs to cell $k$ and takes value 0 otherwise.
Similarly variable $y_{jk}$ takes value 1 if part $j$ belongs to cell $k$ and takes value 0 otherwise. 
Machines index $i$ takes values from 1 to $m$ and parts index $j$ - from 1 to $p$. 
Cells index $k$ takes values from 1 to $c = \min(m, p)$ because every cell should contain at least one machine and one part and so the number of cells cannot be greater than $m$ and $p$.
Note that if a CFP solution has $n$ cells then for $k$ from $n+1$ to $c$ all variables $x_{ik}, y_{jk}$ will be zero in this model. So we can consider that the CFP solution always has $c$ cells, but some of them can be empty.
The mathematical programming formulation is as follows.
\begin{equation}
\max \left( \frac{n_1^{in}}{2n^{in}} + \frac{n_0^{out} }{2n^{out}} \right)
\end{equation}
where
\[
n^{in} = \sum_{k = 1}^{c} \sum_{i = 1}^{m} \sum_{j = 1}^{p} x_{ik} y_{jk}, \quad
n^{out} = mp - n^{in}
\]
\[
n_1^{in} = \sum_{k = 1}^{c} \sum_{i = 1}^{m} \sum_{j = 1}^{p} a_{ij} x_{ik} y_{jk}, \quad
n_0^{out} = n_0 - (n^{in} - n_1^{in})
\]
subject to
\begin{eqnarray}
\sum_{k = 1}^{c} x_{ik} = 1 \quad \forall i \in 1,...,m \\
\sum_{k = 1}^{c} y_{jk} = 1 \quad \forall j \in 1,...,p
\end{eqnarray}
\begin{eqnarray}
\sum_{i = 1}^{m} \sum_{j = 1}^{p} x_{ik}y_{jk} \ge \sum_{i = 1}^{m} x_{ik} \quad \forall k \in 1,...,c \\
\sum_{i = 1}^{m} \sum_{j = 1}^{p} x_{ik}y_{jk} \ge \sum_{j = 1}^{p} y_{jk} \quad \forall k \in 1,...,c 
\end{eqnarray}
\begin{eqnarray}
x_{ik} \in \{0, 1\} \quad \forall i \in 1,...,m \\
y_{jk} \in \{0, 1\} \quad \forall j \in 1,...,p
\end{eqnarray}
The objective function (5) is the grouping efficiency in this model.
Constraints (6) and (7) impose that every machine and every part belongs to some cell.
Constraints (8) and (9) guarantee that every non-empty cell contains at least one machine and one part.
Note that if singleton cells are not allowed then the right sides of inequalities (8) and (9) should have a coefficient of 2. All these constraints can be linearized in a standard way, but the objective function will still be fractional. That is why the exact solution of this problem presents considerable difficulties.

A cells configuration in the mathematical model is described by the number of machines $m_k$ and parts $p_k$ in every cell $k$.
\[
m_k = \sum_{i = 1}^{m} x_{ik}, \quad p_k = \sum_{j = 1}^{p} y_{jk}
\]
It is easy to see that when a cells configuration is fixed all the optimization criteria (1) - (4) become equivalent (proposition \ref{eqv}).
\begin{proposition}
\label{eqv}
If a cells configuration is fixed then objective functions (1) - (4): $\eta$, $\tau$, $GCI$, $E+V$ become equivalent and reach the optimal value on the same solutions. 
\end{proposition}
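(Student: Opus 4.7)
The plan is to reduce all four objective functions to equivalent monotone functions of a single quantity, namely $n_1^{in}$, under the assumption that the cells configuration (the tuple $(m_k, p_k)_{k=1}^c$) is fixed. Once this reduction is carried out, every function will be either strictly increasing or strictly decreasing in $n_1^{in}$, so the set of optimal solutions is the same for all of them.

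The first step is to observe which quantities are determined solely by the configuration and the input matrix, and which depend on the actual assignment of rows and columns to cells. With the configuration fixed we have
\[
n^{in} = \sum_{k=1}^{c} m_k p_k, \qquad n^{out} = mp - n^{in},
\]
both constants, as are $n_1$ and $n_0 = mp - n_1$, which depend only on $A$. The only quantity that still varies with the assignment is $n_1^{in}$, and the remaining counts are affine in it:
\[
n_0^{in} = n^{in} - n_1^{in}, \quad n_1^{out} = n_1 - n_1^{in}, \quad n_0^{out} = n^{out} - n_1 + n_1^{in}.
\]

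The second step is to substitute these expressions into (1)--(4) and read off the monotonicity in $n_1^{in}$. For $\eta$, both $\eta_1 = n_1^{in}/n^{in}$ and $\eta_2 = (n^{out} - n_1 + n_1^{in})/n^{out}$ are affine and increasing in $n_1^{in}$ with positive denominators, hence so is their convex combination. For $GCI = n_1^{in}/n_1$ the claim is immediate. For $E+V = n_1 + n^{in} - 2\,n_1^{in}$, the expression is affine and strictly decreasing in $n_1^{in}$, so minimizing $E+V$ is equivalent to maximizing $n_1^{in}$. The only case requiring a brief argument is $\tau = n_1^{in}/(n_1 + n^{in} - n_1^{in})$: writing $f(x) = x/(C - x)$ with $C = n_1 + n^{in}$ constant, one computes $f'(x) = C/(C-x)^2 > 0$ on its domain, so $\tau$ is strictly increasing in $n_1^{in}$ as well.

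Combining the two steps, maximizing any of $\eta$, $\tau$, $GCI$ and minimizing $E+V$ is equivalent to maximizing $n_1^{in}$ over all assignments compatible with the fixed configuration, so the four criteria attain their optima on exactly the same set of solutions. There is no real obstacle: the only point that is not purely arithmetic is verifying that $\tau$ is monotone rather than merely affine, and this is handled by the one-line derivative computation above. A small sanity check worth including is that the denominators $n^{in}$, $n^{out}$, $n_1$, and $C - n_1^{in} = n_1 + n_0^{in}$ are all strictly positive whenever at least one non-empty cell exists and $A$ is neither all-zero nor all-one, which is the only regime in which the objectives are defined.
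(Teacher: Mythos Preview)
Your proof is correct and follows essentially the same approach as the paper: both arguments fix the configuration, observe that $n^{in}$, $n^{out}$, $n_1$, $n_0$ become constants so that only $n_1^{in}$ varies, and then verify that each of the four objectives is monotone in $n_1^{in}$. The only cosmetic difference is that you handle $\tau$ via a derivative, whereas the paper simply notes that maximizing $n_1^{in}$ simultaneously minimizes $n_0^{in}$, so the numerator of $\tau$ grows while its denominator shrinks.
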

\begin{proof}
When a cells configuration is fixed the following values are constant: $m_k, p_k$.
The values of $n_1$ and $n_0$ are always constant. 
The values of $n^{in}$ and $n^{out}$ are constant since $n^{in} = \sum_{k=1}^c m_k p_k$ and $n^{out} = mp - n^{in}$. 
So if we maximize the number of ones inside the cells 
$n_1^{in} $ then simultaneously $n_0^{in} =n^{in}-n_1^{in} $ is minimized, $n_0^{out} =n_0 
-n_0^{in} $ is maximized, and $n_1^{out} =n_1 -n_1^{in} $ is minimized. This 
means that the grouping efficiency $\eta =q\frac{n_1^{in} 
}{n^{in}}+(1-q)\frac{n_0^{out} }{n^{out}}$ is maximized, the grouping 
efficacy $\tau =\frac{n_1^{in} }{n_1 +n_0^{in} }$ is maximized, the grouping 
capability index $GCI=1-\frac{n_1^{out} }{n_1 }$ is maximized, and the 
number of exceptions plus voids $E+V=n_1^{out} +n_0^{in} $ is minimized simultaneously on the same optimal solution. \qed
\end{proof}

\section{Algorithm description}
\label{algorithm}
The main function of our heuristic is presented by algorithm \ref{cmh-main}.
\begin{algorithm}
\caption{Main function}
\label{cmh-main}
\begin{algorithmic}
\Function {Solve}{ }
 \State \Call {FindOptimalCellRange} {$MinCells,MaxCells$} 
 \State $ConfigsNumber = 2000$ 
  \State $AllConfigs$ = \Call {GenerateConfigs}{$MinCells,MaxCells,ConfigsNumber$}
  \State \Return \Call {CMHeuristic}{$AllConfigs$} 
\EndFunction
\end{algorithmic}
\end{algorithm}
First we call \Call {FindOptimalCellRange} {$MinCells,MaxCells$} function that returns a potentially optimal range of cells - from $MinCells$ to $MaxCells$.
Then these values and $ConfigsNumber$ (the number of cell configurations to be generated) are passed to \Call {GenerateConfigs}{$MinCells,MaxCells,ConfigsNumber$} function which generates random cell configurations. The generated configurations $AllConfigs$ are passed to \Call {CMHeuristic}{$AllConfigs$}  function which finds a high-quality solution for every cell configuration and then chooses the solution with the greatest efficiency value.
\begin{algorithm}
\caption{Procedure for finding the optimal cell range}
\label{cmh-findcellrange}
\begin{algorithmic}
\Function {FindOptimalCellRange}{ $MinCells,MaxCells$}
  \If {($m > p$)}
   \State $minDimension = p$
   \Else
   \State $minDimension = m$
   \EndIf
   \State $ConfigsNumber = 500$
  \State $Configs$ = \Call {GenerateConfigs}{$2,minDimension,ConfigsNumber$}
  \State $Solution$ = \Call {CMHeuristic}{$Configs$}
  \State $BestCells$ = \Call {GetCellsNumber}{$Solution$}
  \State $MinCells$ = $BestCells$ - [$minDimension$ * 0,1 ] \Comment { [ ] - integer part }
  \State $MaxCells$ = $BestCells$ + [$minDimension$ * 0,1 ] 
\EndFunction
\end{algorithmic}
\end{algorithm}

In function \Call{FindOptimalCellRange}{ $MinCells,MaxCells$} (algorithm \ref{cmh-findcellrange}) we look over all the possible number of cells from 2 to maximal possible number of cells which is equal to $\min(m,p)$.
For every number of cells in this interval we generate a fixed number of configurations (we use 500 in this paper) calling \Call {GenerateConfigs}{$2,minDimension,ConfigsNumber$}
and then use our \Call {CMHeuristic}{$Configs$} to obtain a potentially optimal number of cells.
But we consider not only one number of cells but together with its 10\%-neighborhood $[MinCells,MaxCells]$.

\begin{algorithm}
\caption{Configurations generation}
\label{cmh-generate}
\begin{algorithmic}
\Function {GenerateConfigs}{$MinCells,MaxCells,ConfigsNumber$}
 \State $Configs = \emptyset $
 \For {$cells = MinCell,MaxCells $} 
 \State $Generated = \Call {GenerateConfigs}{cells,ConfigsNumber}$
 \State $Configs = Configs \cup Generated$
  \State \Return $Configs$
  \EndFor
\EndFunction
\end{algorithmic}
\end{algorithm}

Function \Call {GenerateConfigs}{$MinCells,MaxCells,ConfigsNumber$} (algorithm \ref{cmh-generate}) returns a set of randomly generated cell configurations with a number of cells ranging from $MinCells$ to $MaxCells$.
We call \Call {GenerateConfigsUniform}{$cells$, $ConfigsNumber$} function which randomly selects with uniform distribution $ConfigsNumber$ configurations from all possible cell configurations with the specified number of cells. 
Note that mathematically a cell configuration with $k$ cells can be represented as an integer partition of $m$ and $p$ values into sums of $k$ summands.
We form a set of configurations for every number of cells and then join them.

\begin{algorithm}
\caption {CMHeuristic}
\label{cmh}
\begin{algorithmic}
\Function {CMHeuristic}{$Configs$} 
\State $Best = 0$
\ForAll{$config \in Configs $}
\State $Solution = \Call {ImproveSolution}{config}$
\If {$Solution > Best$}
\State $Best = Solution$
\EndIf
\EndFor
\State \Return $Best$
\EndFunction
\end{algorithmic}
\end{algorithm}

Function \Call {CMHeuristic}{Configs} (algorithm \ref{cmh}) gets a set of cell configurations and
for each configuration runs an improvement algorithm to obtain a good solution. 
A solution includes a permuted machine-part matrix, a cell configuration, and the corresponding grouping efficiency value. 
The function chooses the best solution and returns it.

\begin{algorithm}
\caption {Solution improvement procedure}
\label{imp}
\begin{algorithmic}
\Function {ImproveSolution}{$config,\eta_{current}$} 
  \State $\eta_{current}$ = \Call {GroupingEfficiency}{config}
  \Repeat
  \State $PartFrom = 0$
  \State $PartTo = 0$
  \For{$part = 1,partsNumber$}
  \For{$cell = 1,cellsNumber$}
  \If {$(\eta_{part,cell} > \eta_{current})$}
   \State $\Delta_{parts} = (\eta_{part,cell} -\eta_{current})$
    \State $PartFrom = GetPartCell(part)$
    \State $PartTo = cell$    
  \EndIf
  \EndFor
  \EndFor

  \State $MachineFrom = 0$
  \State $MachineTo = 0$
  \For{$machine = 1,machinesNumber$}
  \For{$cell = 1,cellsNumber$}
  \If {$(\eta_{machine,cell} > \eta_{current})$}
      \State $\Delta_{machines} = (\eta_{machine,cell} -\eta_{current})$
      \State $MachineFrom$ = \Call {GetMachineCell}{$machine$}
      \State $MachineTo = cell$
  \EndIf
  \EndFor
  \EndFor
  \If {$\Delta_{parts} > \Delta_{machines}$} 
        \State \Call {MovePart}{$PartFrom,PartTo$}
   \Else 
        \State \Call {MoveMachine}{$MachineFrom,MachineTo$}    
  \EndIf
  \Until {$\Delta > 0$ } 
\EndFunction
\end{algorithmic}
\end{algorithm}

Improvement procedure \Call {ImproveSolution}{$config,\eta_{current}$} (algorithm \ref{imp}) works as follows.
We consider all the machines and the parts  in order to know if there is a machine or a part
that we can move to another cell and improve the current efficiency $\eta_{current}$.
First we consider moving of every part on all other cells and compute how the efficiency value changes.
Here $\eta_{part,cell}$ is the efficiency of the current solution where the part with index $part$ is moved to the cell with index $cell$.
This operation is performed for all the parts and the part with the maximum increase in efficiency $\Delta_{parts}$ is chosen.
Then we repeat the same operations for all the machines.
Finally, we compare the best part movement and the best machine movement and choose 
the one with the highest efficiency.
This procedure is performed until any improvement is possible and after that we get the final solution. 

The main idea of \Call {ImproveSolution}{$config,\eta_{current}$} is illustrated on \citet{44} instance 8x12 (table \ref{table6}).
\begin{table}[htbp]
  \centering
  \caption {\citep{44} instance 8x12}
    \begin{tabular}{p{0.5cm}>{\centering\arraybackslash}  p{0.5cm}>{\centering\arraybackslash}  p{0.5cm}>{\centering\arraybackslash}  p{0.5cm}>{\centering\arraybackslash}  p{0.5cm}>{\centering\arraybackslash}  p{0.5cm}>{\centering\arraybackslash}  p{0.5cm}>{\centering\arraybackslash}  p{0.5cm} >{\centering\arraybackslash} p{0.5cm}  >{\centering\arraybackslash}p{0.5cm}>{\centering\arraybackslash} p{0.5cm}>{\centering\arraybackslash} p{0.5cm} >{\centering\arraybackslash}p{0.5cm}>{\centering\arraybackslash} p{0.5cm}}
    \toprule
          & \textbf{1} & \textbf{2} & \textbf{3} & \textbf{4} & \textbf{5} & \textbf{6} & \textbf{7} & \textbf{8} & \textbf{9} & \textbf{10} & \textbf{11} & \textbf{12}   \\ 
    \midrule
    \textbf{1}   & \cellcolor{yellow}1    & \cellcolor{yellow}1     & \cellcolor{yellow}1     & 1     & 0     & 0     & 0     & 0     & 0     & 0     & 0     & 0   \\
    \textbf{2} &\cellcolor{yellow}1    &\cellcolor{yellow}0     &\cellcolor{yellow}1     & 1     & 1     & 1     & 1     & 0     & 0     & 1     & 0     & 0     \\
    \textbf{3} &\cellcolor{yellow}0     &\cellcolor{yellow}0    &\cellcolor{yellow}1     & 1     & 1     & 1     & 1     & 1     & 1     & 0     & 0     & 0     \\
    \textbf{4} & 0     & 0     & 0     &\cellcolor{yellow}0     &\cellcolor{yellow}0     &\cellcolor{yellow}1     &\cellcolor{yellow}1    &\cellcolor{yellow}1     &\cellcolor{yellow}1     & 1     & 0     & 0     \\
    \textbf{5} & 0     & 0     & 0     &\cellcolor{yellow} 0     &\cellcolor{yellow} 0     &\cellcolor{yellow} 0    &\cellcolor{yellow} 1    &\cellcolor{yellow} 1     &\cellcolor{yellow} 1     & 1     & 0     & 0     \\
    \textbf{6} & 0     & 0     & 0     &\cellcolor{yellow} 0     &\cellcolor{yellow} 0     &\cellcolor{yellow} 0     &\cellcolor{yellow} 1     &\cellcolor{yellow} 1     &\cellcolor{yellow} 1     & 0     & 1     & 0     \\
    \textbf{7} & 0     & 0     & 0     & 0     & 0     & 0     & 0     & 0     & 0     &\cellcolor{yellow} 0     &\cellcolor{yellow} 1     &\cellcolor{yellow} 1     \\
    \textbf{8} & 0     & 0     & 0     & 0     & 0     & 0     & 0     & 0     & 0     &\cellcolor{yellow} 0     &\cellcolor{yellow} 1     &\cellcolor{yellow} 1     \\
    \bottomrule
    \end{tabular}
  \label{table6}
\end{table}
To compute the grouping efficiency for this solution we need to know the number 
of ones inside cells $n_1 ^{in}$, the total number of elements  inside cells $n^{in} $, the 
number of zeros outside cells $n_0^{out} $ and the number of elements outside cells $n^{out}$.
The grouping efficiency is then calculated by the  following formula:
\[
\eta =q\cdot\frac{n_1^{in}}{n^{in}} +(1-q)\cdot\frac{n_0^{out} }{n^{out}} = \frac{1}{2}\cdot\frac{20}{33} +\frac{1}{2}\cdot\frac{48}{63} \approx 68.4\%  
\]

Looking at this solution (table \ref{table6}) we can conclude that it is possible for example to move part 4 from the second cell to the first one. And this way the number of zeros inside cells decreases by 3 and the number of ones outside cells also decreases by 4. So it is profitable to attach column 4 to the first cell as it is shown on table \ref{table7}.
\begin{table}[htbp]
  \centering
  \caption{Moving part 4 from cell 2 to cell 1}
    \begin{tabular}{>{\centering\arraybackslash}p{0.5cm} >{\centering\arraybackslash} p{0.5cm} >{\centering\arraybackslash} p{0.5cm}>{\centering\arraybackslash}  p{0.5cm}>{\centering\arraybackslash}  p{0.5cm}>{\centering\arraybackslash}  p{0.5cm}>{\centering\arraybackslash}  p{0.5cm}>{\centering\arraybackslash}  p{0.5cm}>{\centering\arraybackslash}  p{0.5cm}  >{\centering\arraybackslash}p{0.5cm} >{\centering\arraybackslash} p{0.5cm} >{\centering\arraybackslash} p{0.5cm} >{\centering\arraybackslash} p{0.5cm} >{\centering\arraybackslash} p{0.5cm}}
    \toprule
          & \textbf{1} & \textbf{2} & \textbf{3} & \textbf{4} & \textbf{5} & \textbf{6} & \textbf{7} & \textbf{8} & \textbf{9} & \textbf{10} & \textbf{11} & \textbf{12} \\ 
    \midrule
    \textbf{1}   & \cellcolor{yellow}1    & \cellcolor{yellow}1     & \cellcolor{yellow}1     & \cellcolor{yellow} 1     & 0     & 0     & 0     & 0     & 0     & 0     & 0     & 0     \\
    \textbf{2} &\cellcolor{yellow}1    &\cellcolor{yellow}0     &\cellcolor{yellow}1     & \cellcolor{yellow} 1     & 1     & 1     & 1     & 0     & 0     & 1     & 0     & 0     \\
    \textbf{3} &\cellcolor{yellow}0     &\cellcolor{yellow}0    &\cellcolor{yellow}1     & \cellcolor{yellow} 1     & 1     & 1     & 1     & 1     & 1     & 0     & 0     & 0     \\
    \textbf{4} & 0     & 0     & 0     &\cellcolor{red}0     &\cellcolor{yellow}0     &\cellcolor{yellow}1     &\cellcolor{yellow}1    &\cellcolor{yellow}1     &\cellcolor{yellow}1     & 1     & 0     & 0     \\
    \textbf{5} & 0     & 0     & 0     &\cellcolor{red} 0     &\cellcolor{yellow} 0     &\cellcolor{yellow} 0    &\cellcolor{yellow} 1    &\cellcolor{yellow} 1     &\cellcolor{yellow} 1     & 1     & 0     & 0    \\
    \textbf{6} & 0     & 0     & 0     &\cellcolor{red} 0     &\cellcolor{yellow} 0     &\cellcolor{yellow} 0     &\cellcolor{yellow} 1     &\cellcolor{yellow} 1     &\cellcolor{yellow} 1     & 0     & 1     & 0    \\
    \textbf{7} & 0     & 0     & 0     & 0     & 0     & 0     & 0     & 0     & 0     &\cellcolor{yellow} 0     &\cellcolor{yellow} 1     &\cellcolor{yellow} 1     \\
    \textbf{8} & 0     & 0     & 0     & 0     & 0     & 0     & 0     & 0     & 0     &\cellcolor{yellow} 0     &\cellcolor{yellow} 1     &\cellcolor{yellow} 1     \\
    \bottomrule
    \end{tabular}
  \label{table7}
\end{table}
For the modified cells configuration we have :
\[\eta =\frac{1}{2}\cdot\frac{23}{33} +\frac{1}{2}\cdot\frac{51}{63} \approx 75.32\%  \]
As a result the efficiency is increased almost for 7 percent. Computational results show that using such modifications could considerably improve the solution. The idea is to compute an increase in efficiency for each column and row when it is moved to another cell and then perform the modification corresponding to the maximal increase. For example, table \ref{table8} shows the maximal possible increase in efficiency for every row when it is moved to another cell.
\begin{table}[htbp]
  \centering
  \caption{Maximal efficiency increase for each row}
    \begin{tabular}{p{0.3cm}>{\centering\arraybackslash}  p{0.3cm}>{\centering\arraybackslash}  p{0.3cm}>{\centering\arraybackslash}  p{0.3cm}>{\centering\arraybackslash}  p{0.3cm}>{\centering\arraybackslash}  p{0.3cm}>{\centering\arraybackslash}  p{0.3cm}>{\centering\arraybackslash}  p{0.3cm} >{\centering\arraybackslash} p{0.3cm}  >{\centering\arraybackslash}p{0.3cm}>{\centering\arraybackslash} p{0.3cm}>{\centering\arraybackslash} p{0.3cm} >{\centering\arraybackslash}p{0.3cm}>{\centering\arraybackslash} p{0.3cm} p{1cm}}
    \toprule
          & \textbf{1} & \textbf{2} & \textbf{3} & \textbf{4} & \textbf{5} & \textbf{6} & \textbf{7} & \textbf{8} & \textbf{9} & \textbf{10} & \textbf{11} & \textbf{12} &       &  \\ 
    \midrule
    \textbf{1}   & \cellcolor{yellow}1    & \cellcolor{yellow}1     & \cellcolor{yellow}1     & 1     & 0     & 0     & 0     & 0     & 0     & 0     & 0     & 0     &       & - 6.94\% \\
    \textbf{2} &\cellcolor{yellow}1    &\cellcolor{yellow}0     &\cellcolor{yellow}1     & 1     & 1     & 1     & 1     & 0     & 0     & 1     & 0     & 0     &       & + 1.32\% \\
    \textbf{3} &\cellcolor{yellow}0     &\cellcolor{yellow}0    &\cellcolor{yellow}1     & 1     & 1     & 1     & 1     & 1     & 1     & 0     & 0     & 0     &       & + 7.99\% \\
    \textbf{4} & 0     & 0     & 0     &\cellcolor{yellow}0     &\cellcolor{yellow}0     &\cellcolor{yellow}1     &\cellcolor{yellow}1    &\cellcolor{yellow}1     &\cellcolor{yellow}1     & 1     & 0     & 0     &       & - 0.07\% \\
    \textbf{5} & 0     & 0     & 0     &\cellcolor{yellow} 0     &\cellcolor{yellow} 0     &\cellcolor{yellow} 0    &\cellcolor{yellow} 1    &\cellcolor{yellow} 1     &\cellcolor{yellow} 1     & 1     & 0     & 0     &       & + 0.77\% \\
    \textbf{6} & 0     & 0     & 0     &\cellcolor{yellow} 0     &\cellcolor{yellow} 0     &\cellcolor{yellow} 0     &\cellcolor{yellow} 1     &\cellcolor{yellow} 1     &\cellcolor{yellow} 1     & 0     & 1     & 0     &       & + 0.77\% \\
    \textbf{7} & 0     & 0     & 0     & 0     & 0     & 0     & 0     & 0     & 0     &\cellcolor{yellow} 0     &\cellcolor{yellow} 1     &\cellcolor{yellow} 1     &       & - 4.62\% \\
    \textbf{8} & 0     & 0     & 0     & 0     & 0     & 0     & 0     & 0     & 0     &\cellcolor{yellow} 0     &\cellcolor{yellow} 1     &\cellcolor{yellow} 1     &       & - 4.62\% \\
    \bottomrule
    \end{tabular}
  \label{table8}
\end{table}

\section{Computational results}
\label{computres}

\begin{sidewaystable}
\caption{Computational results}
\label{cmp}
\begin{tabular}{cccccccccc}

\hline
\raisebox{-3.00ex}[0cm][0cm]{{\#}}& 
\raisebox{-3.00ex}[0cm][0cm]{Source}& 
\raisebox{-3.00ex}[0cm][0cm]{mxp}& 
\multicolumn{5}{c}{Efficiency value,{\%}} & 
\raisebox{-3.00ex}[0cm][0cm]{Time, sec}& 
\raisebox{-3.00ex}[0cm][0cm]{Cells} \\
\cline{4-8} 
 & 
 & 
 & 
\raisebox{-1.50ex}[0cm][0cm]{Bhatnagar \& Saddikuti}& 
\raisebox{-1.50ex}[0cm][0cm]{Goldengorin et al.}& 
\multicolumn{3}{c}{Our} & 
 & 
 \\
\cline{6-8} 
 & 
 & 
 & 
 & 
 & 
Min& 
Avg& 
Max& 
 & 
 \\
\hline
1& 
\citet{54} & 
10x20& 
95.40& 
95.93 \footnote{This solution has a mistake.}& 
95.66& 
95.66& 
\textbf{95.66}& 
0.36& 
7 \\
\hline
2& 
\citet{55} & 
20x20& 
92.62& 
93.85& 
95.99& 
95.99& 
\textbf{95.99} & 
0.62& 
9 \\
\hline
3& 
\citet{38} & 
20x20& 
85.63& 
88.71& 
90,11& 
90,16& 
\textbf{90,22} & 
0,88& 
9 \\
\hline
4& 
\citet{5} & 
20x35& 
88.31& 
88.05& 
93,34& 
93,47& 
\textbf{93,55} & 
1,62& 
10 \\
\hline
5& 
\citet{8} & 
20x35& 
90.76& 
95.64& 
95,43& 
95,78& 
\textbf{95,79} & 
1,54& 
10 \\
\hline
6& 
\citet{55} & 
20x51& 
87.86& 
94.11& 
95,36& 
95,4& 
\textbf{95,45} & 
3,1& 
12 \\
\hline
7& 
\citet{13} & 
20x40& 
98.82& 
100.00& 
100& 
100& 
100& 
1,8& 
7 \\
\hline
8& 
\citet{13} & 
20x40& 
95.33& 
97.48& 
97,7& 
97,75& 
\textbf{97,76} & 
2,42& 
12 \\
\hline
9& 
\citet{13} &
20x40& 
93.78& 
96.36& 
96,84& 
96,88& 
\textbf{96,89} & 
2,56& 
12 \\
\hline
10& 
\citet{13} & 
20x40& 
87.92& 
94.32& 
96,11& 
96,16& 
\textbf{96,21} & 
3,3& 
15 \\
\hline
11& 
\citet{13} &
20x40& 
84.95& 
94.21& 
95,94& 
96,03& 
\textbf{96,1}& 
2,84& 
15 \\
\hline
12& 
\citet{13} &
20x40& 
85.06& 
92.32& 
95,85& 
95,9& 
\textbf{95,95}& 
2,76& 
15 \\
\hline
13& 
\citet{56} &
20x40& 
96.44& 
97.39& 
97,78& 
97,78& 
\textbf{97,78}& 
2,12& 
10 \\
\hline
14& 
\citet{56} &
20x40& 
92.35& 
95.74& 
97,4& 
97,4& 
\textbf{97,4}& 
2,2& 
14 \\
\hline
15& 
\citet{56} &
20x40& 
93.25& 
95.70& 
95,81& 
96,03& 
\textbf{96,17}& 
2,48& 
12 \\
\hline
16& 
\citet{56} &
20x40& 
91.11& 
96.40& 
96,98& 
96,98& 
\textbf{96,98}& 
2,78& 
14 \\
\hline
17& 
\citet{55} &
25x40& 
91.09& 
95.52& 
96,48& 
96,48& 
\textbf{96,48}& 
2,58& 
14 \\
\hline
18& 
\citet{53} &
28x35& 
93.43& 
93.82& 
94,81& 
94,85& 
\textbf{94,86}& 
2,46& 
10 \\
\hline
19& 
\citet{26} &
30x41& 
90.66& 
97.22& 
97,38& 
97,53& 
\textbf{97,62}& 
3,54& 
18 \\
\hline
20& 
\citet{48} &
30x50& 
88.17& 
96.48& 
96,77& 
96,83& 
\textbf{96,9}& 
5,02& 
18 \\
\hline
21& 
\citet{21} &
30x90& 
83.18& 
94.62& 
95,37& 
95,84& 
\textbf{96,27}& 
13,1& 
25 \\
\hline
22& 
\citet{12} &
40x100& 
94.75& 
95.91& 
98,06& 
98,1& 
\textbf{98,13}& 
16,88& 
17 \\
\hline
23& 
\citet{53} &
46x105& 
90.98& 
95.20& 
96,1& 
96,18& 
\textbf{96,29}& 
23,9& 
18 \\
\hline
24& 
\citet{31} &
50x150& 
93.05& 
92.92& 
96,08& 
96,17& 
\textbf{96,27}& 
51,66& 
24 \\
\hline
\end{tabular}
\end{sidewaystable}
In all the experiments for determining a potentially optimal range of cells we use 500 random cell configurations for each cells number and for obtaining the final solution we use 2000 random configurations. An Intel Core i7 machine with 2.20 GHz CPU and 8.00 Gb of memory is used in our experiments. We run our heuristic on 24 CFP benchmark instances taken from the literature. The sizes of the considered problems vary from 10x20 to 50x150. The computational results are presented in table \ref{cmp}. For every instance we make 50 algorithm runs and report minimum, average and maximum value of the grouping efficiency obtained by the suggested heuristic over these 50 runs. We compare our results with the best known values taken from \citet{23} and \citet{57}. We have found better solutions unknown before for 23 instances of the 24 considered. For CFP instance 6 we have found the same optimal solution with 100\% of grouping efficiency as in \citet{23}. For CFP instance 1 the solution of \citet{23} has some mistake. For this instance having a small size of 10x20 it can be proved that our solution is the global optimum applying an exact approach \citep{2} for the grouping efficiency objective and all the possible number of cells from 1 to 10.
\section{Concluding remarks}
\label{conclusion}
In this paper we present a new heuristic algorithm for solving the CFP.
The high quality of the solutions is achieved due to the enumeration of 
different numbers of cells and different cell 
configurations and applying our improvement procedure. Since the suggested heuristic works fast (the solution for 
one cell configuration is achieved in several milliseconds for any instance from 10x20 to 
50x150) we apply it for thousands of different configurations. Thus a big variety 
of good solutions is covered by the algorithm and the best of them has high 
grouping efficiency.

\section{Acknowledgements}
This work is partially supported by Laboratory of Algorithms and Technologies for Network Analysis, National Research University Higher School of Economics.\newline

\newpage 

\end{document}